\documentclass[letterpaper, 10 pt, conference]{ieeeconf}  

\IEEEoverridecommandlockouts                              

\usepackage{cite}
\usepackage{amsmath,amssymb,amsfonts}
\usepackage{algorithmic}
\usepackage{graphicx}
\usepackage{textcomp}
\usepackage{booktabs}
\usepackage{subcaption}
\usepackage{xcolor}

\usepackage{bm}

\newtheorem{proposition}{Proposition}

\newtheorem{lemma}{Lemma}



\makeatletter
\renewenvironment{proposition}[1][]{%
  \refstepcounter{proposition}%
  \trivlist
  \item[\hskip\labelsep\bfseries Proposition~\theproposition%
    \if\relax\detokenize{#1}\relax.\else\ (#1).\fi]%
  \itshape
}{%
  \endtrivlist
}
\makeatother

\def\BibTeX{{\rm B\kern-.05em{\sc i\kern-.025em b}\kern-.08em
    T\kern-.1667em\lower.7ex\hbox{E}\kern-.125emX}}
\title{\LARGE \bf
Stabilizing Low-Stiffness Soft Robots at Unstable Equilibria: A Model-Based Optimal Control Approach
}

\author{Nikhil Nair$^{1}$, Daniele Caradonna$^{2}$, Anup Teejo Mathew$^{3}$, Federico Renda$^{3}$, \\ 
Cosimo Della Santina$^{1}$ and Daniel Feliu-Talegon$^{1}$ 
\thanks{*This work was supported in part by the Dutch Research Foundation (NWO) through the VENI grant ROSES 20297, as part of Project EMERGE and in part by the European Union (ERC, RIPLEY, 101165078).}
\thanks{$^{1}$Nikhil Nair, Daniel Feliu-Talegon  and Cosimo Della Santina are with the Department of Cognitive Robotics,
Delft University of Technology, 2628 CD Delft, The Netherlands, (e-mail: nikhilnicky972@gmail.com; d.feliutalegon@tudelft.nl; C.DellaSantina@tudelft.nl.}%
\thanks{$^{2}$Daniele Caradonna is 
 with Department of Excellence in Robotics and AI, Scuola Superiore Sant’Anna, Pisa, Italy. (e-mail: daniele.caradonna@santannapisa.it).}%
 \thanks{$^{3}$Anup Teejo Mathew and Federico Renda are with Department of Mechanical and Nuclear Engineering, Khalifa University of Science and Technology, Abu Dhabi, UAE (e-mail: anup.mathew@ku.ac.ae; federico.renda@ku.ac.ae).}
}

\begin{document}

\maketitle
\thispagestyle{empty}
\pagestyle{empty}

\begin{abstract}
Continuum soft robots are highly nonlinear and underactuated systems, which makes stabilizing them around unstable equilibrium points difficult. This challenge becomes even greater in low-stiffness regimes, where gravitational forces dominate elastic forces. In this letter, we explore whether optimal control can address these challenges using a computationally efficient method suitable for real-time use. We propose a nonlinear model predictive control (NMPC) approach based on Differential Dynamic Programming (DDP), combined with an accurate reduced-order Variable Strain model with analytical derivatives. This results in an efficient control framework for addressing the stabilization problem. Tested on a soft inverted pendulum and other hybrid soft–rigid systems, the method outperforms traditional controllers, stabilizes more challenging configurations, and operates reliably at real-time speeds (50-150 Hz), demonstrating both effectiveness and practical feasibility in experimental settings.
\end{abstract}

\section{Introduction}
\label{sec:introduction}

Soft robotics has grown rapidly over the past decade, driven by advances in materials, fabrication, and new technologies \cite{laschi2016soft}. Although soft robotics have advanced substantially, their widespread deployment remains limited by the difficulty of achieving precise and reliable control \cite{della2023model}. 


This perspective has recently shifted, primarily due to the development of reduced-order models that provide tractable and interpretable representations of soft robot dynamics \cite{armanini2023soft}. Beyond the well-known Piecewise Constant Curvature (PCC) assumption for modeling, more general reduced-order formulations—including Geometrically Variable Strain (GVS) approaches \cite{ mathew2025reduced} and polynomial curvature models \cite{della2019control}—offer dynamically consistent, finite-dimensional yet accurate Cosserat-rod-based representations that are well suited for control design.

However, underactuation remains a central challenge in the control of soft robots, even when reduced-order formulations are employed, as these models typically exhibit significantly more degrees of freedom than actuation inputs. Consequently, most model-based controllers provide formal guarantees only for restricted subclasses of systems, particularly those amenable to collocated feedback \cite{della2020soft} or to collocated Partial Feedback Linearization (PFL) \cite{spong1994partial}, which enables trajectory tracking of the actuated coordinates but only above certain stiffness thresholds. Linearization-based Linear Quadratic Regulator (LQR) control extends stabilization beyond purely collocated control. Nonlinear strategies—such as energy-injection switching control \cite{trumic2022stability} and closed-form passivity-based control in Hamiltonian form \cite{pagnanelli2024assessing}—have demonstrated effective solutions but are likewise constrained by stiffness limitations.

Optimal control is widely used to stabilize underactuated rigid systems around their unstable equilibrium configurations; however, its real-time application to soft robots remains challenging. Model Predictive Control (MPC), valued for its explicit handling of constraints and uncertainties \cite{norouzi2021constrained, hachen2025non}, is commonly employed with learned models \cite{haggerty2023control}. Data-driven methods \cite{yang2024model, bruder2020data} are typically integrated within MPC frameworks, while reduced-order FEM approaches enable real-time receding-horizon control \cite{tonkens2021soft, menager2024condensed} and offline trajectory optimization \cite{sanders2023dynamically}. Fully nonlinear MPC remains limited to simplified or learned models. A notable exception is \cite{caradonna2026model}, which investigates analytical model-based optimal control for soft robots. However, its integration with nonlinear optimal control has so far only been demonstrated in simulation, without real-time validation or evidence of improved stabilization around unstable equilibria in low-stiffness regimes.

Despite recent progress, important gaps remain: collocated methods are constrained by stiffness limitations, non-collocated strategies do not fully overcome low-stiffness challenges, and optimal control implementations rely either on simplified models with limited generalization capability or on high-dimensional models that are impractical for real-time deployment. Then, the central research question we address is: \textit{How can real-time optimal control methods be used to accurately and efficiently control soft robots operating in low-stiffness regimes around unstable equilibrium configurations?}

We develop a control approach that combines an analytical reduced-order model of soft robot configurations with a nonlinear model predictive control (NMPC) framework. By exploiting analytical dynamics gradients, the method enables efficient Differential Dynamic Programming (DDP) iterations for real-time closed-loop control. The approach stabilizes systems in mass–stiffness regimes where zero-dynamics constraints prevent both collocated and non-collocated model-based controllers from stabilizing unstable equilibria, while maintaining robustness and generalization across soft and hybrid robotic systems within a unified modeling–control framework.


\section{Preliminaries}

\subsection{Lagrangian form of soft robots}

A slender soft body of length \( L \) can be modeled as a Cosserat rod, represented as a continuous stack of rigid cross-sections parameterized by a curvilinear abscissa \( s \in [0,\, 1] \). Each cross-section is assigned a rigid-body configuration through the homogeneous transformation $\boldsymbol{g}(s) \in SE(3)$, which defines the directed spatial curve $\boldsymbol{g}(\bullet):s\rightarrow \boldsymbol{g}(s)$. The rod configuration is parameterized by a strain field, which is discretized using a finite set of generalized coordinates \cite{mathew2025reduced}. The rod dynamics in these coordinates are obtained by integrating the free-body equations derived from Cosserat equilibrium along the rod length, projecting them onto the generalized-coordinate space, $\bm{q}$, via the Jacobian, and incorporating the elastic, damping, and actuation laws. The resulting dynamics can be written in the form:
\begin{equation} \label{eq:GVSdynamics}
    \bm{M}(\bm{q}) \ddot{\bm{q}} + \bm{C}(\bm{q},\dot{\bm{q}}) \dot{\bm{q}}  + \bm{K}\bm{q} + \bm{G}(\bm{q}) + \bm{D}\dot{\bm{q}}  = \bm{A}(\bm{q}) \, \bm{u},
\end{equation}
\noindent where $\bm{M}(\bm{q}) \in \mathbb{R}^{n \times n}$, is the generalized mass matrix, $ \bm{C}(\bm{q}, \dot{\bm{q}}) \in \mathbb{R}^{n \times n}$ is the Coriolis matrix, $\bm{D} \in \mathbb{R}^{n \times n}$ is the elastic damping matrix, $ \bm{K} \in \mathbb{R}^{n \times n}$ is the stiffness matrix, $\bm{G}(\bm{q}) \in \mathbb{R}^{n}$ is the gravitational
force. The term $\bm{A}(\bm{q}) \in \mathbb{R}^{n \times n_{\mathrm{a}}}$ denotes the actuation matrix, which maps control inputs $\bm{u}$ into the Lagrangian generalized forces $\bm{\tau}_{\mathrm{a}}$.

\subsection{Collocated form}

A Lagrangian system of the form (\ref{eq:GVSdynamics}) can be conveniently reformulated by separating the dynamic equations of the actuated $\bm{\theta_\mathrm{a}} \in \mathbb{R}^{n_\mathrm{a}}$ and unactuated variables  $\bm{\theta_\mathrm{u}} \in \mathbb{R}^{n - n_\mathrm{a}}$, such that $\bm{\theta} = [\bm{\theta_\mathrm{a}}^{T}, \, \bm{\theta_\mathrm{u}}^{T}]^{T}$. Finding a more suitable representation of the system dynamics can greatly simplify both analysis and control design through an appropriate transformation of the generalized coordinates. This procedure is described in detail in \cite{pustina2024input}. After performing the change of coordinates, the decoupled dynamics of the system can be expressed as:
\begin{equation}\label{eq:GVSdynamics_2}
\begin{aligned}
\begin{bmatrix}
\tilde{\bm{M}}_\mathrm{aa} & \tilde{\bm{M}}_\mathrm{au} \\
\tilde{\bm{M}}_\mathrm{ua} & \tilde{\bm{M}}_\mathrm{uu}
\end{bmatrix}
\begin{bmatrix}
\ddot{\bm{\theta}}_\mathrm{a} \\
\ddot{\bm{\theta}}_\mathrm{u}
\end{bmatrix}
&+
\begin{bmatrix}
\tilde{\bm{C}}_\mathrm{aa} & \tilde{\bm{C}}_\mathrm{au} \\
\tilde{\bm{C}}_\mathrm{ua} & \tilde{\bm{C}}_\mathrm{uu}
\end{bmatrix}
\begin{bmatrix}
\dot{\bm{\theta}}_\mathrm{a} \\
\dot{\bm{\theta}}_\mathrm{u}
\end{bmatrix}
+
\begin{bmatrix}
\tilde{\bm{G}}_\mathrm{a} \\
\tilde{\bm{G}}_\mathrm{u}
\end{bmatrix} \\
+
\begin{bmatrix}
\tilde{\bm{K}}_\mathrm{aa} & \tilde{\bm{K}}_\mathrm{au} \\
\tilde{\bm{K}}_\mathrm{ua} & \tilde{\bm{K}}_\mathrm{uu}
\end{bmatrix}
\begin{bmatrix}
\bm{\theta}_\mathrm{a} \\
\bm{\theta}_\mathrm{u}
\end{bmatrix}
&+
\begin{bmatrix}
\tilde{\bm{D}}_\mathrm{aa} & \tilde{\bm{D}}_\mathrm{au} \\
\tilde{\bm{D}}_\mathrm{ua} & \tilde{\bm{D}}_\mathrm{uu}
\end{bmatrix}
\begin{bmatrix}
\dot{\bm{\theta}}_\mathrm{a} \\
\dot{\bm{\theta}}_\mathrm{u}
\end{bmatrix}
=
\begin{bmatrix}
\bm{u} \\
\bm{0}
\end{bmatrix}.
\end{aligned}
\end{equation}

where, for compactness, we define, $\tilde{\bm{h}} = \tilde{\bm{C}}\dot{\bm{\theta}} + \tilde{\bm{G}} + \tilde{\bm{K}}\bm{\theta} + \tilde{\bm{D}}\dot{\bm{\theta}}$. Based on the general form in (\ref{eq:GVSdynamics_2}), different control strategies can be employed. Here, we focus on open-loop and closed-loop approaches. In particular, we consider the following purely feedforward controller and collocated PFL:

\begin{equation}\label{eq1}
\bm{u} = \tilde{\bm{G}}_a(\overline{\bm{\theta}})+\tilde{\bm{K}}_{aa}(\overline{\bm{\theta}})\overline{\bm{\theta}}_a+\tilde{\bm{K}}_{au}(\overline{\bm{\theta}})\overline{\bm{\theta}}_u\,
\end{equation}

\vspace{-0.5cm}

\begin{equation}\label{eq2}
    \bm{u} =
    \left( \tilde{\bm{h}}_a -
    \tilde{\bm{M}}_\mathrm{au}\tilde{\bm{M}}^{-1}_\mathrm{uu}
    \tilde{\bm{h}}_u \right)
    +
    \left(
    \tilde{\bm{M}}_\mathrm{aa}
    -\tilde{\bm{M}}_\mathrm{au}\tilde{\bm{M}}^{-1}_{uu}\tilde{\bm{M}}_\mathrm{ua}
    \right)
    \bm{\tau}, \,
\end{equation}

\vspace{-0,3cm}

\noindent where $ \bm{\tau} = -\bm{K}_P \tilde{\bm{\theta}}_\mathrm{a} - \bm{K}_D \dot{\bm{\theta}}_\mathrm{a};\bm{K}_P \succ \bm{0}; \bm{K}_D \succ \bm{0}\,$


\section{Methodology}

\subsection{Instability Analysis}

For underactuated systems such as soft robots, zero dynamics describe the internal behavior of the unactuated coordinates when the outputs are constrained by control. Their stability determines whether the internal dynamics remain bounded. We show here that specific combinations of stiffness and gravity terms relative to mass terms, the system cannot be stabilized using state of the art control approaches.

\begin{proposition}[Open-Loop instability]
Consider the open-loop soft robot dynamics (\ref{eq:GVSdynamics_2}) subject to a constant control action (\ref{eq1}) and suppose that $(\overline{\bm{\theta}},\bm 0)$ be an equilibrium. Assume the inertia and damping matrices satisfy $\tilde{\bm{M}} \succ \bm{0}$ and $\tilde{\bm{D}} \succ \bm{0}$. Let the Hessian of the potential energy at the equilibrium be:

\begin{equation}\label{rest1}
\bm{H} = \tilde{\bm{K}} + \left.\nabla_{\bm{\theta}} \tilde{\bm{G}}\right|_{\bm{\theta}=\overline{\bm{\theta}}} \,
\end{equation}

If $\bm H$ possesses at least one negative eigenvalue, i.e., $\bm H \not\succeq 0$,
then the equilibrium $(\overline{\bm{\theta}},\bm 0)$ is unstable.

\end{proposition}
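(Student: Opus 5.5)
We linearize the closed-loop dynamics at $(\overline{\bm{\theta}},\bm 0)$ and show that the Jacobian has an eigenvalue with positive real part; Lyapunov's indirect method then gives instability. Because the input (\ref{eq1}) is constant, it only shifts the equilibrium and does not enter the linearization. With $\bm{\delta}=\bm{\theta}-\overline{\bm{\theta}}$, the Coriolis terms $\tilde{\bm{C}}\dot{\bm{\theta}}$ are quadratic in velocity and vanish to first order. Likewise, the variation of $\tilde{\bm{M}}(\bm{\theta})$ multiplies $\ddot{\bm{\theta}}$, which is zero at equilibrium, so it also drops out. The linearization is therefore
\begin{equation*}
\tilde{\bm{M}}\ddot{\bm{\delta}} + \tilde{\bm{D}}\dot{\bm{\delta}} + \bm{H}\bm{\delta} = \bm 0 ,
\end{equation*}
with $\tilde{\bm{M}}\succ\bm 0$ and $\tilde{\bm{D}}\succ\bm 0$ evaluated at $\overline{\bm{\theta}}$. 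One point needs checking here: if $\tilde{\bm{K}}$ depends on $\bm{\theta}$, as (\ref{eq1}) suggests, its derivative terms must be absorbed into $\bm{H}$. I will read (\ref{rest1}) as the full Jacobian of the static generalized forces at $\overline{\bm{\theta}}$, which is the Hessian of the potential. Under this reading $\bm{H}$ is symmetric.

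The main step is to show that $\bm{H}$ having a negative eigenvalue forces an eigenvalue $\lambda$ with $\mathrm{Re}\,\lambda>0$. I would use an energy argument in the spirit of Kelvin--Tait--Chetaev. Define
\begin{equation*}
V=\tfrac12\dot{\bm{\delta}}^T\tilde{\bm{M}}\dot{\bm{\delta}}+\tfrac12\bm{\delta}^T\bm{H}\bm{\delta},
\end{equation*}
whose derivative along solutions is $\dot V=-\dot{\bm{\delta}}^T\tilde{\bm{D}}\dot{\bm{\delta}}\le 0$. Start from $\bm{\delta}(0)=\bm v$ and $\dot{\bm{\delta}}(0)=\bm 0$, where $\bm v$ is an eigenvector of $\bm{H}$ with negative eigenvalue; then $V(0)<0$. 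Suppose all characteristic roots satisfy $\mathrm{Re}\,\lambda\le 0$. I would first exclude purely imaginary roots. If $\lambda=i\omega$ with eigenvector $\bm w$, then
\begin{equation*}
\bm w^*(-\omega^2\tilde{\bm{M}}+i\omega\tilde{\bm{D}}+\bm{H})\bm w=0 .
\end{equation*}
The imaginary part gives $\omega\,\bm w^*\tilde{\bm{D}}\bm w=0$, so $\omega=0$. Then $\bm{H}\bm w=0$ and $\lambda=0$.

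I expect the case of a zero root, which is possible when $\bm{H}$ is singular, to be the main obstacle. To avoid it, I would argue directly with the nonlinear system via Chetaev's theorem. Take $W=-V_{\mathrm{nl}}-\epsilon\,\dot{\bm{\delta}}^T\tilde{\bm{M}}\bm{\delta}$, where $V_{\mathrm{nl}}$ is the nonlinear total energy minus its equilibrium value. The cross term makes $\dot W$ positive definite on the region where $W>0$ near the origin. Its derivative contributes $-\epsilon\,\bm{\delta}^T\bm{H}\bm{\delta}$ together with a velocity-squared term that is dominated by the damping for small $\epsilon$. This is the standard proof that a strict non-minimum of the potential with full dissipation is unstable.

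If the linear route is kept instead, the remaining case is as follows. Suppose every root satisfies $\mathrm{Re}\,\lambda<0$ apart from $\lambda=0$, and assume the zero root is semisimple. Then solutions stay bounded and converge to an equilibrium in $\ker\bm{H}$. At such an equilibrium $V=0$, which contradicts $V$ being nonincreasing from $V(0)<0$. A nonsemisimple zero root would itself give polynomial growth, hence instability. Since the linear argument has these degenerate cases, I would present the Chetaev argument as the main proof, and use the linear spectral argument for the generic case where $\bm{H}$ is nonsingular. In that case the inertia theorem for damped systems gives exactly as many unstable roots as $\bm{H}$ has negative eigenvalues, and Lyapunov's indirect method concludes.
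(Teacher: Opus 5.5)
Your linearization matches the paper's, and your exclusion of nonzero imaginary roots is fine. The gap is that the Chetaev argument, which you make the main proof, does not work.

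With $W=-V_{\mathrm{nl}}-\epsilon\,\dot{\bm\delta}^T\tilde{\bm M}\bm\delta$, the cross term differentiates, at quadratic order, to $-\epsilon\dot{\bm\delta}^T\tilde{\bm M}\dot{\bm\delta}+\epsilon\bm\delta^T\tilde{\bm D}\dot{\bm\delta}+\epsilon\bm\delta^T\bm H\bm\delta$. You therefore get $+\epsilon\bm\delta^T\bm H\bm\delta$, which is negative where $W>0$ and $\dot{\bm\delta}=\bm 0$.

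Flipping the sign of the cross term does not rescue it. You dropped the mixed term $-\epsilon\bm\delta^T\tilde{\bm D}\dot{\bm\delta}$, and $-\epsilon\bm\delta^T\bm H\bm\delta$ is indefinite once $\bm H$ has a nonnegative eigenvalue. For $W=-V+\epsilon\dot{\bm\delta}^T\tilde{\bm M}\bm\delta$ on the linearization, one gets exactly $\dot W=2\epsilon W+\dot{\bm\delta}^T(\tilde{\bm D}+2\epsilon\tilde{\bm M})(\dot{\bm\delta}-\epsilon\bm\delta)$. Take $\dot{\bm\delta}=\tfrac{\epsilon}{2}\bm\delta$ with $\bm\delta\neq\bm 0$ and $\bm\delta^T\bm H\bm\delta=0$. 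Then $W=\tfrac{3\epsilon^2}{8}\bm\delta^T\tilde{\bm M}\bm\delta>0$, but $\dot W=\tfrac{\epsilon^2}{4}\bm\delta^T(\epsilon\tilde{\bm M}-\tilde{\bm D})\bm\delta<0$ for small $\epsilon$, at points arbitrarily close to the origin. So Chetaev's hypothesis fails unless $\bm H\prec\bm 0$. As planned, your proof covers only nonsingular $\bm H$, via the inertia theorem.

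The zero root is not a real obstacle. The indirect method needs only one eigenvalue with $\mathrm{Re}\,\lambda>0$, and one always exists. The paper simply asserts that $\det(\lambda^2\tilde{\bm M}_e+\lambda\tilde{\bm D}_e+\bm H)=0$ has a real positive root; it gives no justification, but the justification is short. For real $\lambda\ge 0$ that matrix is symmetric, and its smallest eigenvalue is continuous in $\lambda$. It is negative at $\lambda=0$ and positive for large $\lambda$, so it vanishes at some $\lambda^*>0$, whether or not $\bm H$ is singular. This uses only symmetry and $\tilde{\bm M}_e\succ\bm 0$, not the damping.

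Your own linear route also closes, and it is a legitimate alternative. The missing fact is that the zero root is always semisimple. A Jordan chain would need $\bm H\bm x=\bm 0$ and $\bm H\bm x_2=-\tilde{\bm D}\bm x$; premultiplying by $\bm x^T$ gives $\bm x^T\tilde{\bm D}\bm x=0$, so $\bm x=\bm 0$.

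Your remark that a nonsemisimple zero root ``gives instability'' is not a valid conclusion for the nonlinear system. A Jordan block on the imaginary axis is outside the scope of Lyapunov's indirect method. With semisimplicity established, your convergence-to-$\ker\bm H$ contradiction gives a root with $\mathrm{Re}\,\lambda>0$ in every case. That argument, not the Chetaev function, should be your main proof.
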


\begin{proof}
Define the displacement variable $\bm x = \bm{\theta}-\overline{\bm{\theta}}$.  
Linearizing the dynamics at the equilibrium $(\overline{\bm{\theta}},\bm 0)$ yields $\tilde{\bm{M}}_e \ddot{\bm x} + \tilde{\bm{D}}_e \dot{\bm x} + \bm H \bm x = \bm{0}$, 
where $\tilde{\bm{M}}_e := \tilde{\bm{M}}(\overline{\bm{\theta}})$, $\tilde{\bm{D}}_e := \tilde{\bm{D}}(\overline{\bm{\theta}})$  and $\bm H = \nabla^2 U(\overline{\bm{\theta}})$, with $U(\bm{\theta})$ a potential function satisfying $\nabla U(\bm{\theta}) = \tilde{\bm{K}} \bm{\theta} + \tilde{\bm{G}}(\bm{\theta})$. Introducing the state vector $\bm z = [\bm x^\top\ \dot{\bm x}^\top]^\top$, the linearized system takes the form $\dot{\bm z} = \bm A \bm z$ with
\[
\bm A =
\begin{bmatrix}
\bm 0 & \bm I \\
-\tilde{\bm{M}}_e^{-1}\bm H & -\tilde{\bm{M}}_e^{-1}\tilde{\bm{D}}_e
\end{bmatrix}.
\]

The eigenvalues of $\bm A$ satisfy the characteristic equation $\det\!\left(\lambda^2 \tilde{\bm{M}}_e + \lambda \tilde{\bm{D}}_e + \bm H\right)=\bm{0}$.
Since $\tilde{\bm{M}}_e \succ 0$ and $\tilde{\bm{D}}_e \succ 0$ while $\bm H$ has negative eigenvalues ($\bm H \not\succeq 0$), the quadratic matrix polynomial above has at least one real positive eigenvalue. Consequently, the linearized system is unstable. By Lyapunov's indirect method, the equilibrium $(\overline{\bm{\theta}},\bm 0)$ of the nonlinear system is therefore unstable.
\end{proof}

%
%
%

\begin{proposition}[Collocated PFL instability]
Consider the soft robot dynamics in \eqref{eq:GVSdynamics_2} together with the control law (\ref{eq2}). Let the Hessian of the potential energy of the unactuated coordinates at the equilibrium be: 

\begin{equation}\label{rest1_2}
\bm{H}_u = \tilde{\bm{K}}_\mathrm{uu}
    +  \left.\nabla_{\bm{\theta}_u} \tilde{\bm{G}}_u  \right|_{\bm{\theta}=\bm{\theta}^{*}}
\end{equation}

If $\bm H_u$ possesses at least one negative eigenvalue, i.e., $\bm H_u \not\succeq 0$,
then the equilibrium $(\overline{\bm{\theta}},\bm 0)$ is unstable.

\end{proposition}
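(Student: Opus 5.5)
Our plan is to reduce the closed-loop system under the collocated PFL law (\ref{eq2}) to a cascade in which the actuated coordinates are driven to the desired values, leaving the unactuated coordinates governed by their zero dynamics. We then apply the argument used for Proposition 1 to the zero dynamics.

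First, we substitute (\ref{eq2}) into (\ref{eq:GVSdynamics_2}). Solving the second block row for $\ddot{\bm{\theta}}_\mathrm{u} = -\tilde{\bm{M}}_\mathrm{uu}^{-1}(\tilde{\bm{M}}_\mathrm{ua}\ddot{\bm{\theta}}_\mathrm{a} + \tilde{\bm{h}}_\mathrm{u})$ and inserting it in the first row gives the Schur-complement equation
\[
(\tilde{\bm{M}}_\mathrm{aa}-\tilde{\bm{M}}_\mathrm{au}\tilde{\bm{M}}_\mathrm{uu}^{-1}\tilde{\bm{M}}_\mathrm{ua})\ddot{\bm{\theta}}_\mathrm{a} + \tilde{\bm{h}}_\mathrm{a}-\tilde{\bm{M}}_\mathrm{au}\tilde{\bm{M}}_\mathrm{uu}^{-1}\tilde{\bm{h}}_\mathrm{u} = \bm u .
\]
With the control (\ref{eq2}), this yields $\ddot{\bm{\theta}}_\mathrm{a} = \bm\tau = -\bm K_P\tilde{\bm{\theta}}_\mathrm{a}-\bm K_D\dot{\bm{\theta}}_\mathrm{a}$. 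The closed loop is therefore the cascade of an exponentially stable linear subsystem in $(\tilde{\bm{\theta}}_\mathrm{a},\dot{\bm{\theta}}_\mathrm{a})$ with the unactuated dynamics
\[
\tilde{\bm{M}}_\mathrm{uu}\ddot{\bm{\theta}}_\mathrm{u} + \tilde{\bm{h}}_\mathrm{u} = -\tilde{\bm{M}}_\mathrm{ua}\bm\tau .
\]

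Next, we linearize the full closed loop at the equilibrium $(\overline{\bm{\theta}},\bm 0)$, taking $\bm\theta^*=\overline{\bm\theta}$. Ordering the state as $(\bm x_\mathrm{a},\dot{\bm x}_\mathrm{a},\bm x_\mathrm{u},\dot{\bm x}_\mathrm{u})$, the Jacobian is block lower triangular. The actuated block is $\begin{bmatrix}\bm 0&\bm I\\-\bm K_P&-\bm K_D\end{bmatrix}$, which is Hurwitz. The unactuated block is
\[
\begin{bmatrix}\bm 0&\bm I\\-\tilde{\bm{M}}_\mathrm{uu}^{-1}\bm H_u&-\tilde{\bm{M}}_\mathrm{uu}^{-1}\tilde{\bm{D}}_\mathrm{uu}\end{bmatrix}.
\]
The Coriolis terms vanish at zero velocity, and the $\tilde{\bm{K}}_\mathrm{uu}$ and $\nabla_{\bm\theta_u}\tilde{\bm G}_u$ contributions make up $\bm H_u$. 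The coupling terms only appear below the diagonal. The spectrum of the closed-loop linearization is therefore the union of the two block spectra. This is exactly the zero-dynamics linearization.

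It then remains to show that $\det(\lambda^2\tilde{\bm{M}}_\mathrm{uu}+\lambda\tilde{\bm{D}}_\mathrm{uu}+\bm H_u)=0$ has a root with positive real part when $\bm H_u$ has a negative eigenvalue. Here $\tilde{\bm{M}}_\mathrm{uu}\succ0$ and $\tilde{\bm{D}}_\mathrm{uu}\succ0$, since both are principal submatrices of positive definite matrices. I expect this step to be the main obstacle, because Proposition 1 asserted it without justification.

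We would argue it via inertia theorems for damped systems. The cleanest route is Lyapunov's inertia theorem applied to the energy $V=\tfrac12\dot{\bm x}_\mathrm{u}^\top\tilde{\bm{M}}_\mathrm{uu}\dot{\bm x}_\mathrm{u}+\tfrac12\bm x_\mathrm{u}^\top\bm H_u\bm x_\mathrm{u}$, which has $\dot V=-\dot{\bm x}_\mathrm{u}^\top\tilde{\bm{D}}_\mathrm{uu}\dot{\bm x}_\mathrm{u}\le0$. First, there are no eigenvalues on the imaginary axis. An eigenvalue $i\omega$ with eigenvector $\bm v$ forces $\bm v^*\tilde{\bm D}_\mathrm{uu}\bm v=0$ (take the imaginary part of $\bm v^*P(i\omega)\bm v$ when $\omega\ne0$), hence $\bm v=0$. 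For $\omega=0$ we need $\bm H_u$ nonsingular, which we assume generically or handle by perturbation. Given this, the Taussky/Ostrowski–Schneider inertia theorem says the number of unstable eigenvalues equals the number of negative eigenvalues of $\bm H_u$.

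A more elementary route uses the intermediate value theorem. Take $\bm v$ with $\bm v^\top\bm H_u\bm v<0$ and consider the scalar function $f(\lambda)=\lambda_{\min}(\lambda^2\tilde{\bm{M}}_\mathrm{uu}+\lambda\tilde{\bm{D}}_\mathrm{uu}+\bm H_u)$ for real $\lambda\ge0$. It satisfies $f(0)<0$ and $f(\lambda)\to+\infty$, so some $\lambda>0$ makes the matrix singular. This gives a real positive eigenvalue and also justifies the claim used in Proposition 1.

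Finally, Lyapunov's indirect method gives instability of the nonlinear closed loop at $(\overline{\bm{\theta}},\bm 0)$.
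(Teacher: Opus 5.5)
Your argument is correct and has the same skeleton as the paper's proof: zero dynamics, linearization, the quadratic eigenvalue problem $\det(\lambda^2\tilde{\bm M}_\mathrm{uu}+\lambda\tilde{\bm D}_\mathrm{uu}+\bm H_u)=0$, and Lyapunov's indirect method. The difference is that you prove two steps the paper only asserts.

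First, the paper obtains the internal dynamics by freezing $\dot{\bm\theta}_a=\ddot{\bm\theta}_a=\bm 0$ and linearizes on that manifold. It leaves implicit why this settles stability of the full closed loop. Your Schur-complement reduction to $\ddot{\bm\theta}_a=\bm\tau$ gives a block lower-triangular closed-loop Jacobian. Its spectrum is the union of the spectra of the Hurwitz actuated block and the zero-dynamics block, so the unstable mode is genuinely a closed-loop mode.

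Second, the paper simply claims the matrix polynomial has a real positive root. Your intermediate-value argument on $\lambda\mapsto\lambda_{\min}(\lambda^2\tilde{\bm M}_\mathrm{uu}+\lambda\tilde{\bm D}_\mathrm{uu}+\bm H_u)$ actually proves it. It needs only symmetry of the three matrices and $\tilde{\bm M}_\mathrm{uu}\succ\bm 0$; neither $\tilde{\bm D}_\mathrm{uu}\succ\bm 0$ nor invertibility of $\bm H_u$ is required. As you note, it applies verbatim to Proposition~1.

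The inertia route gives more, namely the count of unstable modes, but it needs $\bm H_u$ nonsingular. Also, your Lyapunov right-hand side $-\mathrm{diag}(\bm 0,2\tilde{\bm D}_\mathrm{uu})$ is only semidefinite. The applicable result is therefore the semidefinite inertia theorem (Carlson--Schneider, or Chen--Wimmer via observability), not the strict Ostrowski--Schneider/Taussky form. Since the IVT argument already suffices, you can drop that route and the perturbation remark.
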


\begin{proof}
By introducing the control law (\ref{eq2}) in \eqref{eq:GVSdynamics_2}, it is straightforwardto show that the actuated variables $\bm{\theta}_a$ are regulated to a constant value $\bm{\theta}_a^{*}$. However, the presence of unactuated coordinates implies that the system still exhibits internal dynamics of dimension $2(n-n_a)$. These dynamics can be obtained by examining the residual dynamics in (\ref{eq:GVSdynamics_2}) after imposing $\dot{\bm{\theta}}_a=\bm{0}$ and $\ddot{\bm{\theta}}_a=\bm{0}$:
\begin{equation}\label{zero_dynamics_2}
\begin{split}
 \tilde{\bm{M}}_\mathrm{uu}(\bm{\theta}_a^{*},\bm{\theta}_u) & \ddot{\bm{\theta}}_u  +  \tilde{\bm{C}}_\mathrm{uu}(\bm{\theta}_a^{*},\bm{\theta}_u,\bm{0},\dot{\bm{\theta}}_u) \dot{\bm{\theta}}_u + \tilde{\bm{G}}_u(\bm{\theta}_a^{*},\bm{\theta}_u)    \\    & +  \tilde{\bm{K}}_\mathrm{ua}  \bm{\theta}_a^{*}  +  \tilde{\bm{K}}_\mathrm{uu}  \bm{\theta}_u + \tilde{\bm{D}}_\mathrm{uu}(\bm{\theta}_a^{*},\bm{\theta}_u) \dot{\bm{\theta}}_u  = \bm{0} \, . 
\end{split}
\end{equation}

\vspace{-0.4cm}

Linearizing the dynamics (\ref{zero_dynamics_2}) at the equilibrium $(\overline{\bm{\theta}},\bm 0)$ and proceeding similarly to Proposition 1. We can obtain that the eigenvalues of the unactuated submatrix $\bm A_{u}$ satisfy  the characteristic equation $\det\!\left(\lambda^2 \tilde{\bm{M}}_{uu,e} + \lambda \tilde{\bm{D}}_{uu,e} + \bm H_u\right)=\bm{0}$ where $\tilde{\bm{M}}_{uu,e} := \tilde{\bm{M}}_{uu}(\overline{\bm{\theta}})$, $\tilde{\bm{D}}_{uu,e} := \tilde{\bm{D}}_{uu}(\overline{\bm{\theta}})$.
Since $\tilde{\bm{M}}_{uu,e} \succ 0$ and $\tilde{\bm{D}}_{uu,e} \succ 0$ while $\bm H_u$ has negative eigenvalues ($\bm H_u \not\succeq 0$), the quadratic matrix polynomial above has at least one real positive eigenvalue. Consequently, the linearized system is unstable. By Lyapunov's indirect method, the equilibrium $(\overline{\bm{\theta}},\bm 0)$ of the nonlinear system is therefore unstable.

\end{proof}



\subsection{Remarks on Instability}

The stability of underactuated soft robotic systems is strongly influenced by the Hessian of the potential energy and the internal dynamics. In open loop, the equilibrium becomes unstable when the Hessian of the potential energy satisfies $\bm H \not\succeq \bm{0}$, which typically occurs when the system stiffness is insufficient for elastic forces to dominate gravity.

With collocated partial feedback linearization (PFL), closed-loop stability depends on the zero dynamics and becomes unstable when $\bm H_u \not\succeq \bm{0}$ at the equilibrium. Notably, $\bm H_u \not\succeq \bm{0}$ implies $\bm H \not\succeq \bm{0}$, whereas the converse does not necessarily hold, allowing collocated PFL to extend the range of low-stiffness regimes with respect to the open-loop case. Although non-collocated PFL linearizes unactuated coordinates and allows pole placement via linear feedback (e.g., linear–quadratic regulator (LQR)), no general stabilization method exists for soft robotic systems. Moreover, such controllers typically provide only local stabilization near the equilibrium and are not suitable for regulating trajectories starting far from the desired configuration. The region of attraction becomes very small in very low-stiffness regimes.

\subsection{Problem Formulation}

The objective of this work is to stabilize the robot around unstable equilibria, particularly in low-stiffness regimes where traditional collocated control strategies fail. Consider a slender soft robot modeled using an accurate reduced-order formulation (e.g., Polynomial Curvature or GVS-based models), resulting in a discretized nonlinear dynamic system of the form $\bm{x}_{k+1} = \bm{f}_d(\bm{x}_k, \bm{u}_k)$, where $\bm{x}_k \in \mathbb{R}^{2n}$ is the state vector (positions and velocities), $\bm{u}_k \in \mathbb{R}^{n_a}$ is the control input, and $\bm{f}_d$ represents the discretized system dynamics. The system is underactuated, i.e., $n_a < n$. The optimal control problem is formulated in discrete time as
\begin{equation}\label{eq:continous_ocp}
    \begin{aligned}
        \min_{\bm{u}_0, \ldots, \bm{u}_{N-1}} 
\; & \ell_f(\bm{x}_N)
+ \sum_{k=0}^{N-1} \ell(\bm{x}_k, \bm{u}_k) \\
        \textnormal{s.t.} \ \ \ \ \bm{x}_{k+1} &= \bm{f}_d(\bm{x}_k, \bm{u}_k), \quad k = 0, \ldots, N-1, \\
        \bm{x}_0 &= \bm{x}_{\mathrm{init}}, \ \bm{u}_k \in \mathcal{U}, \ \bm{x}_k \in \mathcal{X}.
    \end{aligned} 
\end{equation}
Here, $\ell(\bm{x}_k, \bm{u}_k)$ is the stage cost penalizing deviation from the target state and control effort, while $\ell_f(\bm{x}_N)$ is a terminal cost. The control objective is to regulate the robot to a desired unstable equilibrium $\bar{\bm{x}} =[\bar{\bm{q}}^{T}, \bm{0}^T]^T$.



\subsection{Differential Dynamic Programming for NMPC}\label{DDP_MPC}

DDP is an efficient trajectory optimization algorithm for nonlinear dynamical systems. For the shape regulation problem, the optimal control problem can be discretized with quadratic stage and terminal costs:
\begin{equation}
\small
\begin{aligned}
    \min_{\bm{u}_1,\cdots,\bm{u}_{N-1}} \ & \frac{1}{2}\tilde{\bm{x}}_N^T \bm{W}_N \tilde{\bm{x}}_N + \sum_{k=1}^{N-1} \left( \frac{1}{2}\tilde{\bm{x}}_k^T \bm{W}_k \tilde{\bm{x}}_k  + \frac{1}{2}\bm{u}_k^T \bm{R}_k \bm{u}_k  \right)  \\
    &\text{s.t.} \hspace{1cm} \bm{x}_{k+1} = \bm{f}_d(\bm{x}_k, \bm{u}_k), \quad \tilde{\bm{x}}_k  = (\bm{x}_k - \bm{x}_g) 
\end{aligned}
\end{equation}
\noindent where $\bm{W}_k$ and $\bm{R}_k$ are positive semi-definite and positive definite weighting matrices, respectively.

DDP is derived from Bellman's principle of optimality. Let $V_k^*(\bm{x})$ denote the optimal cost-to-go from time step $k$. The dynamic programming recursion is
\begin{equation}
V_k^*(\bm{x}) = \min_u \left( \ell(\bm{x},\bm{u}) + V_{k+1}^*(\bm{f}_d(\bm{x},\bm{u}) \right) \, ,
\end{equation}
\noindent where $\ell(\bm{x},\bm{u})$ denotes the stage cost. Since computing $V_k^*(\bm{x})$ exactly is intractable for nonlinear systems, DDP approximates the state--action value function
\begin{equation}
Q_k(\bm{x},\bm{u}) = \ell(\bm{x},\bm{u}) + V_{k+1}(\bm{f}_d(\bm{x},\bm{u})) \, ,
\end{equation}
with a second-order Taylor expansion around a nominal trajectory $(\bm{x}_k,\bm{u}_k)$. This local quadratic model allows the optimal control variation to be computed analytically as
\begin{equation}
\delta \bm{u}_k = -\bm{Q}_{uu,k}^{-1}(\bm{Q}_{u,k} + \bm{Q}_{ux,k}\delta \bm{x}_k) = -\bm{l}_k - \bm{K}_k \delta \bm{x}_k \, ,
\end{equation}
\noindent where $\bm{l}_k$ is the \textit{feedforward term} and $\bm{K}_k $ is the \textit{time-varying feedback term}. 
Notice that $\bm{Q}_{uu,k}, \bm{Q}_{u,k}, \bm{Q}_{ux,k}$ depends on the partial derivatives $\nabla_{\bm{x}}\bm{f}_d, \nabla_{\bm{u}}\bm{f}_d, \nabla^2_{\bm{x,x}} \bm{f}_d, \nabla^2_{\bm{u,u}} \bm{f}_d, \nabla^2_{\bm{x,u}}\bm{f}_d$.

During the \textit{backward pass}, these quantities are computed recursively together with the local quadratic approximation of the value function. A \textit{forward pass} then applies the updated control law and propagates the nonlinear dynamics to obtain a new trajectory. The process is repeated until the cost converges. The resulting feedforward and feedback terms provide a local control policy
\vspace{-0.2cm}

\begin{equation}\label{ec_rf}
\bm{u}_k = \bm{u}_k^\star - \bm{K}_k(\bm{x}_k - \bm{x}_k^\star) \, ,
\end{equation}

\noindent although in many implementations only the first optimized control input $\bm{u}_k^\star$ is applied to the system, the finite-horizon optimal control is shifted forward and solved again at the next step. 


\subsection{Implementation Details for Real-Time Deployment}

The successful real-time control of a soft robot systems in low-stiffness regimes requires both an accurate dynamic model and a control strategy that adapts to the robot motion in real time. To efficiently compute the system dynamics and their gradients required by DDP, we adopt the variable strain parametrization based on the GVS model \cite{mathew2025reduced}. The kinematics are computed recursively using the Magnus expansion with a fourth-order Zanna quadrature approximation of the variable strain field, combined with Gauss quadrature for efficient integration. This results in a recursive algorithm for computing the Lagrangian dynamics of the system.

We employ an iLQR-type DDP formulation, where the dynamics are linearized and the cost is quadratically approximated using first-order system derivatives, while second-order derivatives of the dynamics are neglected. The Jacobians $\nabla_{\bm{x}}\bm{f}_d, \nabla_{\bm{u}}\bm{f}_d$ can be analytically differentiated using the adapted RNEA for GVS introduced in \cite{mathew2026analytical}. To implement NMPC, the DDP solver is embedded in a receding-horizon framework. At each sampling instant $t_k$, a finite-horizon optimal control problem is solved using the current measured state as the initial condition. Unlike offline trajectory optimization, NMPC must operate in real time. Therefore, instead of iterating DDP until convergence, only a small number of backward–forward passes (1-3) are executed at each time step. To accelerate convergence, the solver is \emph{warm-started} using the shifted solution from the previous NMPC iteration. This allows NMPC to retain the predictive and optimizing properties of DDP while remaining computationally feasible for real-time control. The control framework was first prototyped and validated in simulation in MATLAB, where the DDP/NMPC pipeline, cost shaping, and model verification were developed. For the computational times reported here in simulation, as well as for real-time execution on hardware, the controller and system dynamics were reimplemented in C++ using the Crocoddyl library \cite{mastalli2020crocoddyl} for shooting-based optimal control.

\section{Simulation Results}

This section evaluates the NMPC/DDP controller in simulation and compares it with state-of-the-art techniques for controlling underactuated soft robots. The algorithm is first tested on the Soft Inverted Pendulum (SIP) with length $L = 1~\mathrm{m}$ and radius $r = 0.03~\mathrm{m}$. The material properties are $\rho = 10^{3}~\mathrm{kg/m^{3}}$, $\upsilon = 10^{4}~\mathrm{Pa\cdot s}$, $\nu = 0.5$, and $E = 1\times10^{6}~\mathrm{Pa}$. To demonstrate the superiority of the proposed approach over PFL techniques, especially in low-stiffness regimes, the mass and Young’s modulus parameters are subsequently varied. The NMPC simulation uses a time step $dt = 0.01~\mathrm{s}$, a prediction horizon of $N = 100$ steps ($1.00~\mathrm{s}$), and a total simulation length of $T = 2000$ steps ($20.00~\mathrm{s}$).


\subsection{Baseline Performance on the Soft Inverted Pendulum}

Our proposed methodology is first tested on the SIP system, where the base is clamped and a torque is applied at the tip to control the motion. The cost weights are defined as $\bm{W} = 0.1,\mathbf{I}_4$ for the running state cost, $\bm{R} = \mathrm{diag}(1)$ for the control cost, and $\bm{W}_N = \mathrm{diag}(10^{2},10^{2},10^{1},10^{1})$ for the terminal state cost. The system is initialized at $\bm{x}_0 = [8.5879,-11.5744,0,0,]^{T}$ and the target state is $\bm{x}_g = \bm{0}$, corresponding to the unstable upright equilibrium that the controller aims to stabilize. The approach efficiently achieves swing up trajectory as shown in Fig. \ref{fig:sip_strobe}.a, with a settling time of \(1.70~\mathrm{s}\) (5\% band). Control remains moderate (\(\|u\|_\infty\approx 9.77\) and the final cost vanishes. 

\begin{figure}[h]
  \centering
  \includegraphics[width=0.9\linewidth]{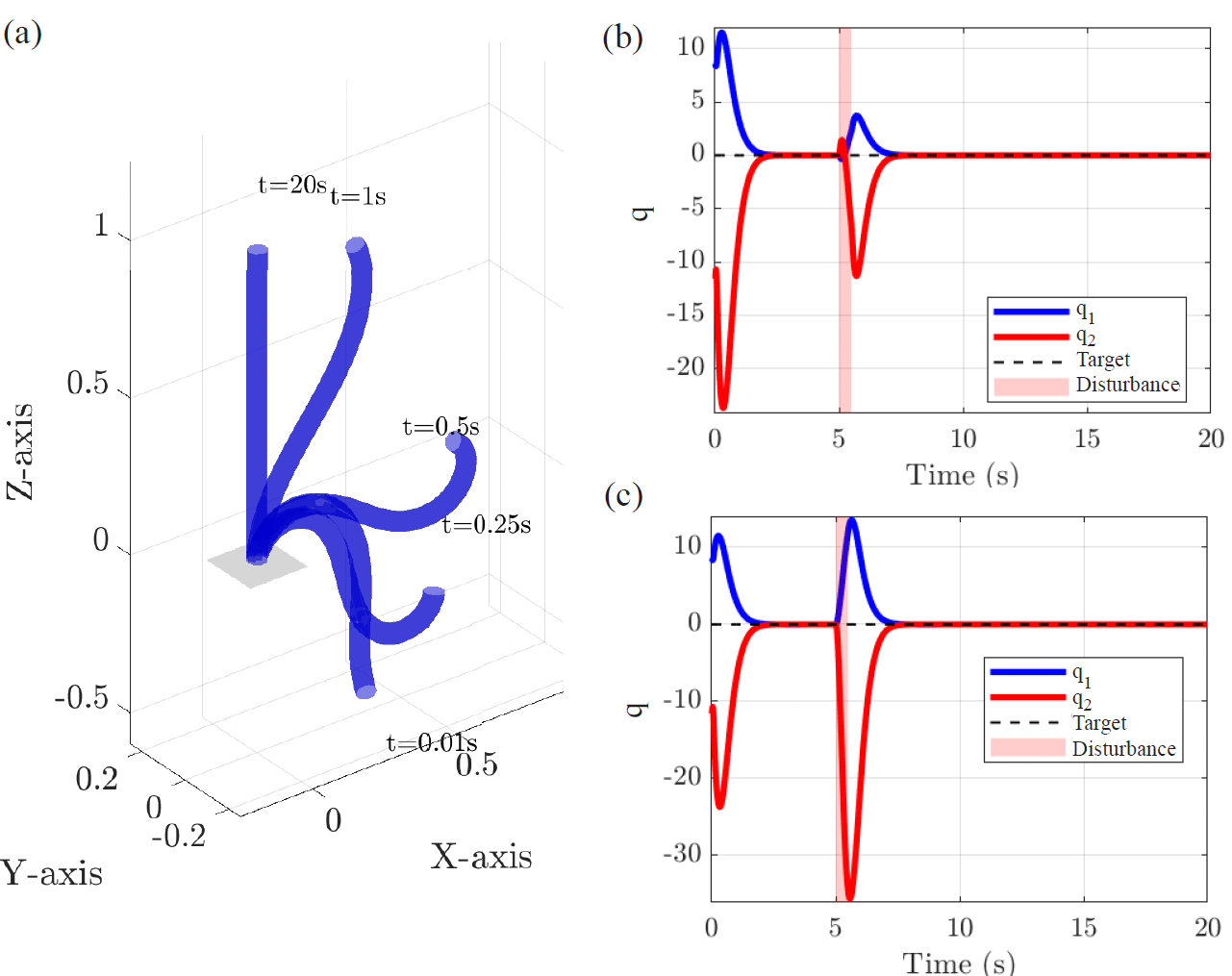}
  \caption{(a) Stroboscopic plot of the SIP during swing-up and stabilization under NMPC; (b) response to an external force of 3 N applied at the tip; and (c) response to an external force of 8 N applied at the midpoint.}
  \label{fig:sip_strobe}
\end{figure}


Closed-loop robustness was also evaluated by applying exogenous forces at the tip and midpoint of the link within the range $[0,15]\,\text{N}$ after the system completed the swing-up trajectory. The system demonstrates strong robustness to disturbances, with recovery times clustering around $1.1$--$1.4\,\text{s}$ for all cases, even for large forces applied at the tip. Fig. \ref{fig:sip_strobe}.b and Fig. \ref{fig:sip_strobe}.c show the results for a $3 \, \text{N}$ force applied at the tip and a $8 \, \text{N}$ force applied at the midpoint, respectively.

We also compared the use of analytical and numerical derivatives (see Fig.~\ref{fig:analytical_vs_numerical}). Employing analytical expressions provides a clear computational advantage, yielding approximately a $4.4\times$ reduction in the average per-step solve time (about $10.2\,\text{ms}$ vs.\ $45.2\,\text{ms}$). This improvement enables significantly higher control rates ($\approx 98\,\text{Hz}$ vs.\ $22\,\text{Hz}$) while maintaining comparable accuracy.

\begin{figure}[h]
  \centering
  \includegraphics[width=0.96\linewidth]{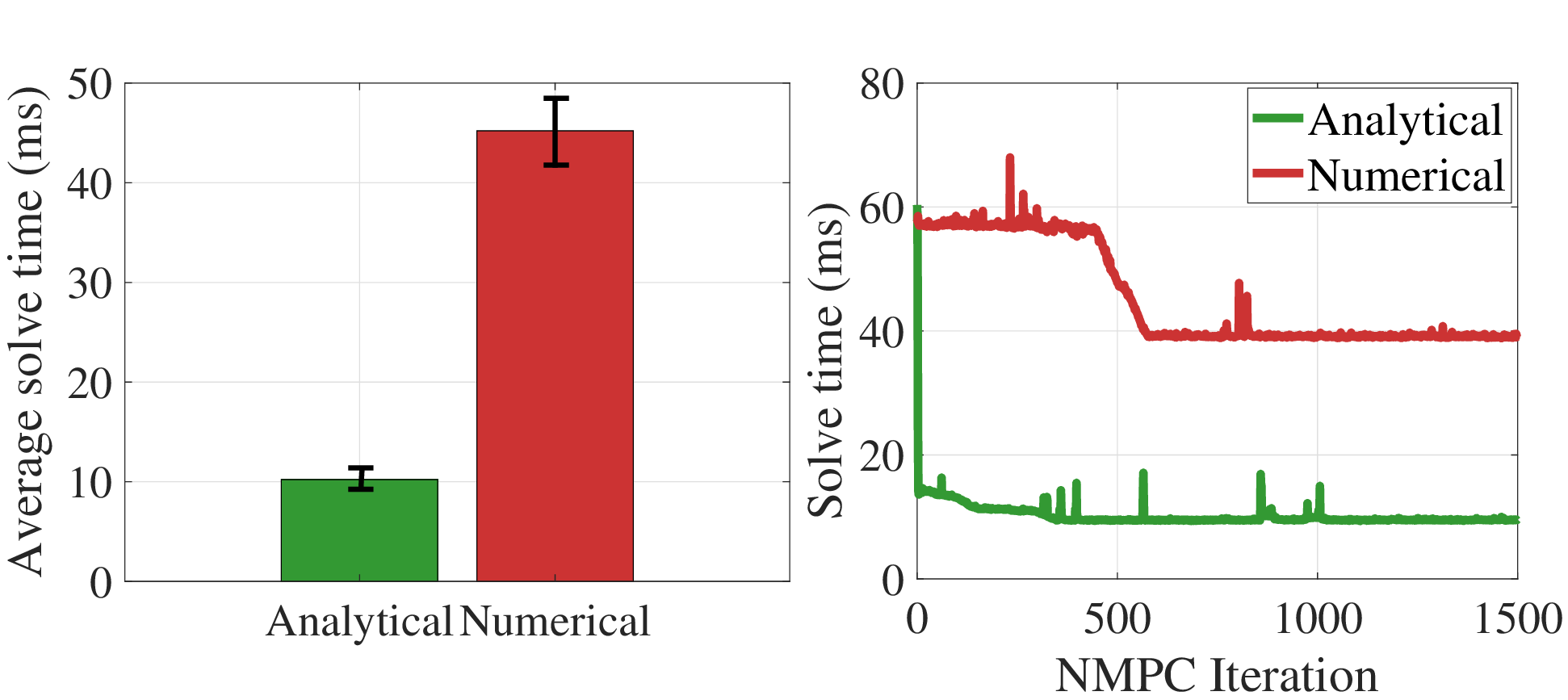}
  \caption{Analytical vs. numerical (finite-difference) dynamics gradients in NMPC/DDP.}
  \label{fig:analytical_vs_numerical}
\end{figure}

\subsection{Stabilization Capabilities of NMPC/DDP}

This section compares the NMPC/DDP controller with two controller baselines, \eqref{eq1} and \eqref{eq2}, for which the conditions under which the system starts to become unstable were previously derived. Apart from that, we also apply a non-collocated PFL controller developed in \cite{della2020soft}. This can be done for this system because the dimension of the unactuated part and the actuated are the same. Therefore, we can linearize the unactuated dynamics, obtaining $\ddot{\bm{\theta}}_u = \bm{u}$, and then apply LQR to the linearized system around the equilibrium. The objective is to demonstrate that NMPC can stabilize operating points in regions where the existing controllers fail. The evaluation is performed on the SIP system with the parameters mentioned previously, but including variations in Young's modulus and mass density.

Controller performance is evaluated over a grid of mass-stiffness pairs $m \in \{1,\dots,9\}, \ \sigma := \log_{10}(1/k) \in \{0,1,\dots,100\}$,
where the logarithmic parameterization emphasizes low-stiffness regimes. The stiffness parameter \(k = EJ_y \) collects the bending properties, while \(m=\rho A\) represents the inertia per unit length for this planar example.

Figure~\ref{fig:feasibility_regions} summarizes control feasibility across the \((m,k)\) plane. The bounds defined by blue curve and red curve delineate the region in which collocated PFL is theoretically valid. As $m$ increases, both thresholds shift downward, thereby widening the feasible band.

\begin{figure}[h]
  \centering
  \includegraphics[width=0.95\linewidth]{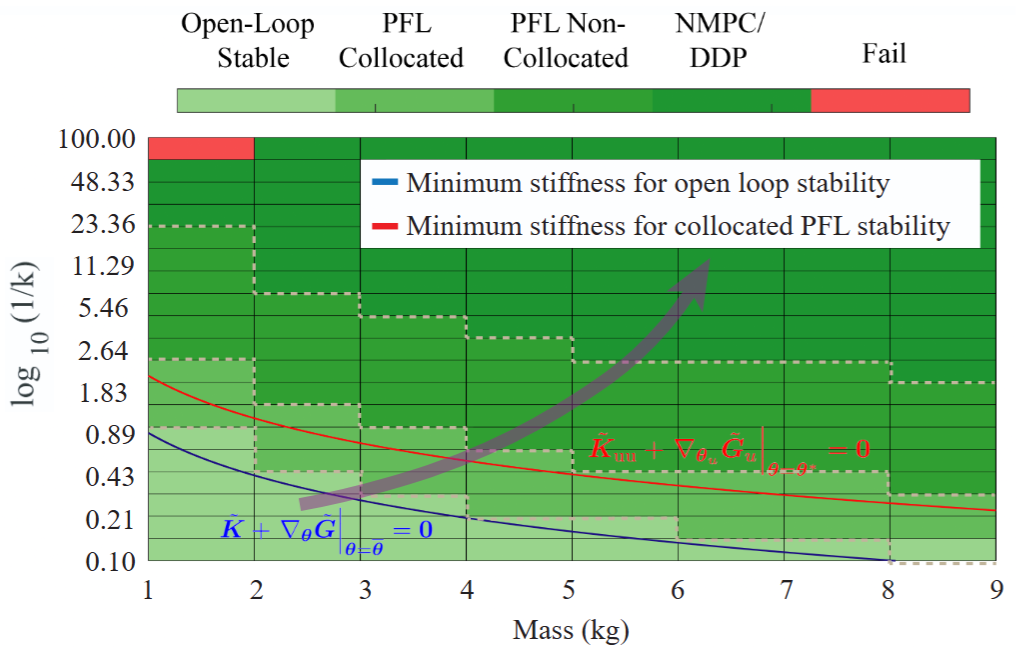} 
  \caption{Feasibility regions over \((m,k)\) for open-loop stability and collocated PFL, with analytically derived bounds. Non-collocated PFL and NMPC/DDP are evaluated beyond these analytical limits. Stable operation lies below each curve, and the purple arrow indicates decreasing stiffness.}
  \label{fig:feasibility_regions}
\end{figure}

Collocated PFL operates reliably only within a limited analytically defined region, while non-collocated PFL stabilizes a larger portion of the mass–stiffness plane at the cost of higher control effort and can handle significantly lower stiffness values. However, beyond a certain stiffness threshold, even the non-collocated approach fails despite the equilibrium remaining locally attractive. NMPC/DDP remains effective beyond the feasibility limits of both PFL methods, especially for high mass and low stiffness where unfavorable zero dynamics limit PFL, demonstrating the benefit of predictive optimal control.

\subsection{Scalability and Generalization}

The controller’s generalizability is tested on two additional soft robotic systems: the R-SIP and the Soft Furuta system. The R-SIP uses a rigid revolute joint at the base where the torque is applied, with a soft link whose bending is modeled using a second-order polynomial yielding to $n=3+1$. The Soft Furuta system consists of a rigid driving arm that rotates about the z-axis, with a soft appendage attached to the tip of the arm through a revolute joint. The deformation of the soft appendage is modeled using a linear polynomial in the curvature within the same framework yielding to $n=2+1$. Results in Fig.~\ref{fig:rsip_furuta_strobe} shows that NMPC/DDP successfully performs swing-up and stabilization for both systems, producing smooth transients, bounded control effort, and rapid convergence to the upright equilibrium despite underactuation and low stiffness.

\begin{figure}[h]
  \centering  \includegraphics[width=0.96\linewidth]{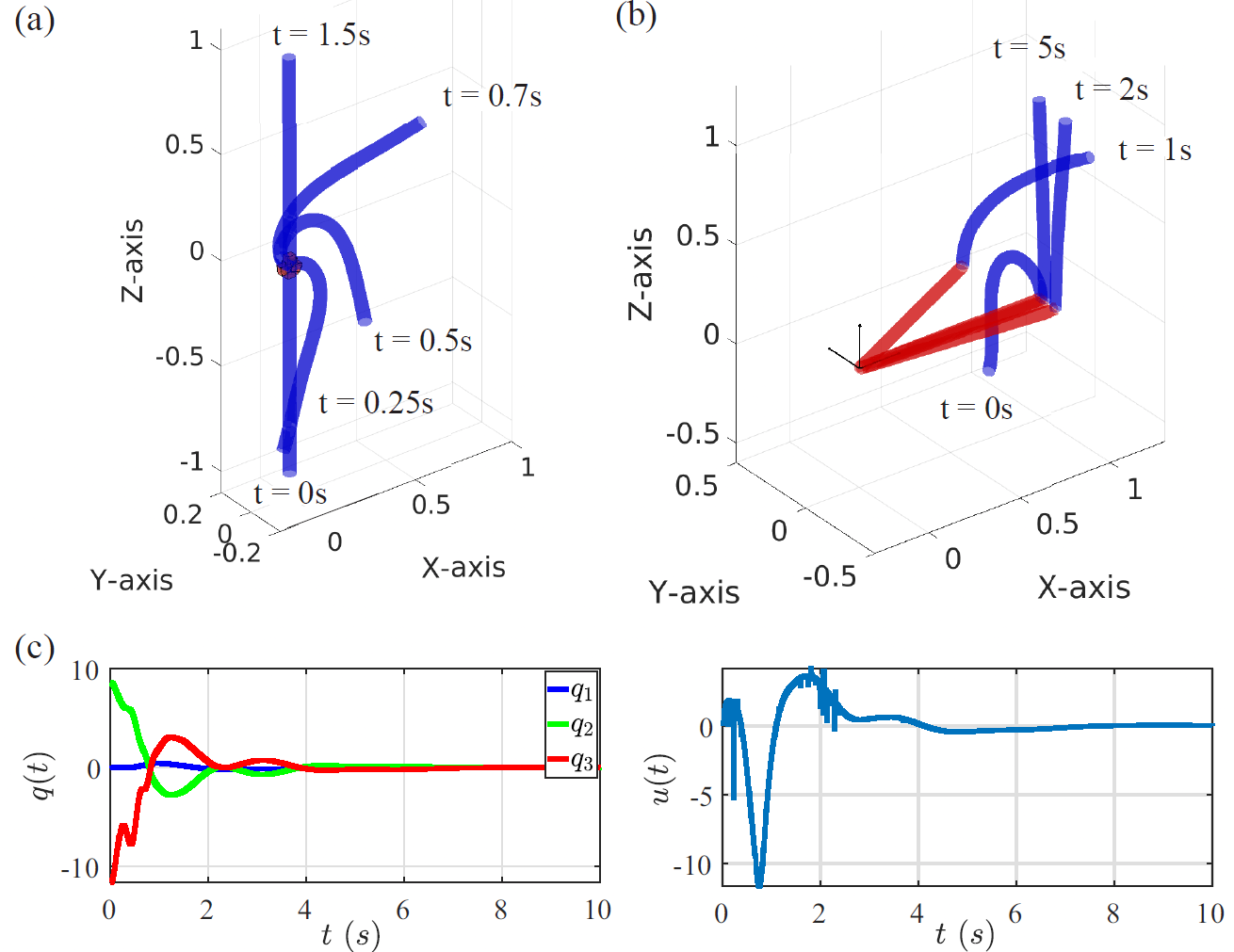}
  \caption{Simulation results: (a) shape evolution of the R-SIP; (b) shape evolution of the Soft Furuta system; and (c) state evolution and control input of the Soft Furuta system.}
  \label{fig:rsip_furuta_strobe}
\end{figure}


From a computational perspective, the approach scales well with model complexity. 
The per-step solve time increases from $7.05\,\text{ms}$ ($n=2$) to $27.61\,\text{ms}$ ($n=5$), with intermediate values of $17.19\,\text{ms}$ ($n=3$) and $25.53\,\text{ms}$ ($n=4$). 
This corresponds to achievable control frequencies ranging from approximately $150\,\text{Hz}$ to $50\,\text{Hz}$. 
These results indicate that the framework remains computationally tractable and maintains real-time performance as the number of deformation modes increases.


\section{Experimental Results}

The approach was experimentally validated on a $7$-DoF Franka Emika Panda manipulating a $0.76\,\mathrm{m}$ deformable cable grasped at a single point. The cable–gripper system is modeled as a R-SIP, composed of a rigid revolute joint at the gripper followed by a soft link representing the cable. The control task consists of swinging the system from the stable downward equilibrium to an unstable configuration and switching between two lateral equilibria. Material parameters were experimentally identified following~\cite{feliu2024dynamic}, yielding a Young’s modulus of $22.07\,\mathrm{MPa}$, density $3546.6\,\mathrm{kg/m^3}$, and elastic damping $2\,\mathrm{MPa\cdot s}$. Quadratic curvature polynomials in $y$ (planar movement) yield a $3$-DoF model of the soft system plus the degree of freedom of the revolute joint. The cable states $\mathbf{q}$ are estimated using an OptiTrack motion-capture system that tracks several markers. Their positions are then used to reconstruct the configuration through inverse kinematics. Fig.~\ref{fig:panda_swing_grid} presents snapshots of the experiments together with the desired shapes computed from the equilibrium model. Fig.~\ref{fig:nmpc_vs_optitrack_states} compares the experimental trajectories obtained with the NMPC controller to the corresponding simulated NMPC solution. The close agreement indicates that the robot successfully reproduces the planned motion. Minor deviations are mainly caused by modeling inaccuracies and cable hysteresis (left/right asymmetry). The reference is tracked sufficiently well to complete the task despite the Franka robot’s joint-speed limits not being explicitly enforced as hard NMPC constraints.


\begin{figure}[h]
  \centering
\includegraphics[width=0.95\linewidth]{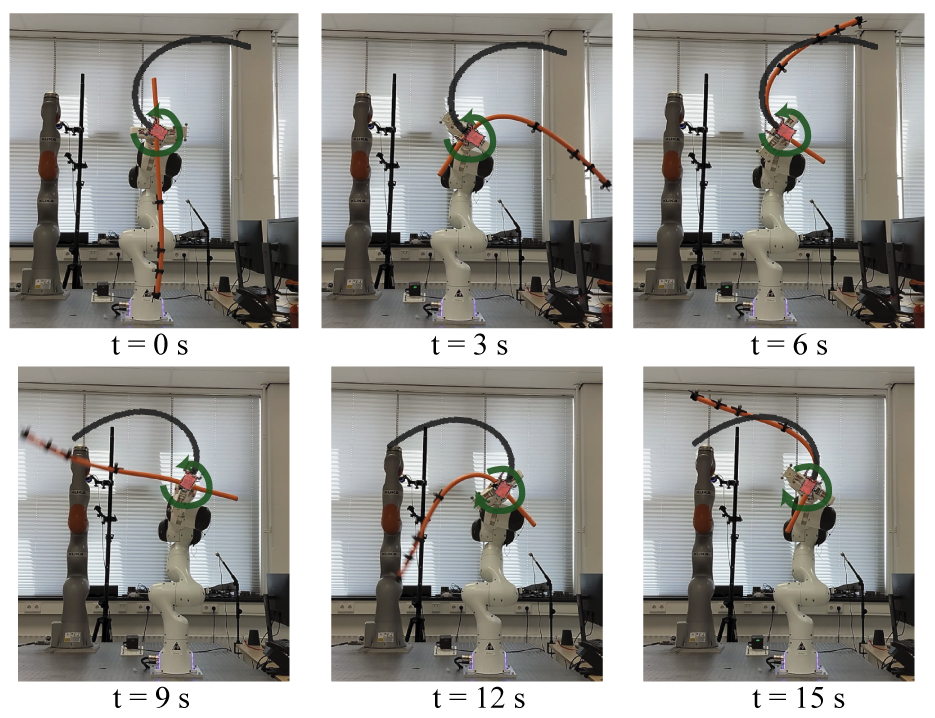}
  \caption{Franka–cable swing-up and transition between two equilibrium configurations to be regulated (black lines).}
  \label{fig:panda_swing_grid}
\end{figure}


\begin{figure}[h]
  \centering  \includegraphics[width=0.92\linewidth,keepaspectratio]{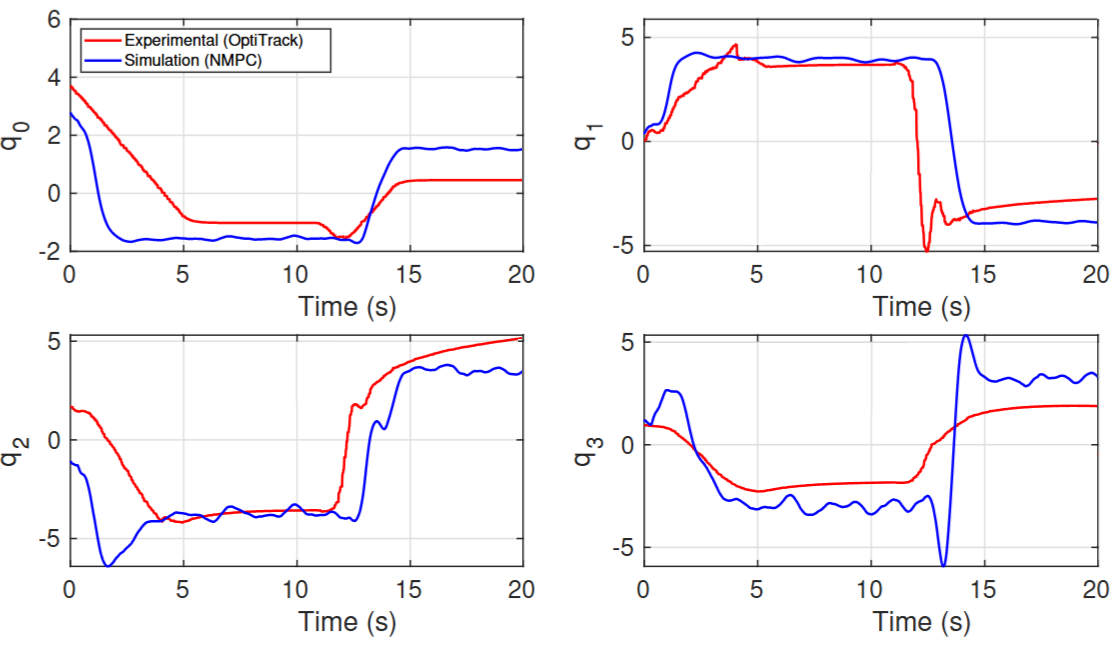} %
  \caption{State trajectories ($q_0$–$q_3$): comparison between the NMPC/DDP simulation and the experiment.}
  \label{fig:nmpc_vs_optitrack_states}
\end{figure}

\section{Conclusions}

This work demonstrates that combining Differential Dynamic Programming (DDP) with Nonlinear Model Predictive Control (NMPC) and differentiable Variable Strain modeling enables the stabilization of underactuated soft robots in low-stiffness regimes where traditional collocated and non-collocated controllers fail. By leveraging analytical gradients derived from RNEA-based dynamics and reduced-order modeling, the proposed framework achieves high computational efficiency while preserving physical model fidelity, offering a promising approach for stabilizing underactuated soft robots in such regimes.



\bibliography{mybibfile}
\bibliographystyle{ieeetr}

\end{document}